\begin{document}

\newtheorem{theorem}{Theorem}
\newtheorem{corollary}{Corollary}
\newtheorem{proposition}{Proposition}
\newtheorem{lemma}[theorem]{Lemma}
\newtheorem{assumption}{Assumption}


\def\IN{ {\Lambda_N}^* }
\def\muN{ {\mu_N} }
\def\IGN{ {\Lambda_{NG}^* } }

\def\HG{ {H_G} }

\def\LambdaG{ {\Lambda_G} }
\def\LambdaN{ {\Lambda_N} }
\def\LambdaGN{ {\Lambda_{NG}} }

\def\A{ {\mathbb{A} }}
\def\N{ {\mathbb{N} }}
\def\R{ {\mathbb{R} }}

\def\TURN{ { \gamma } }
\def\E{{E}}
\def\P{{P}}
\def\T{{T_{\epsilon,k}}}
\def\T{{T^\epsilon_k}}
\def\U{U^\epsilon}
\def\W{W^\epsilon}
\def\p1{ {g} }
\def\lplus{l^+}
\def\lminus{l^-}
\def\norm{ \eta }

\def\SIZE{ {N_k(l) } }
\def\SIZEp{ {N_k(\lplus) } }
\def\SIZEm{ {N_k(\lminus) } }
\def\SIZEt{ {N_k(l^*) } }
\def\SIZEemax{ {N_k(l^{\epsilon, *, (k)}} }
\def\SIZEmax{ {N_k(l^{*, (k)}) } }

\def\Na{ {n_k(w,a)} }
\def\La{ {L_k} }
\def\Le{ {L_{\epsilon, k}} }
\def\Les{ {L_{\epsilon}} }

\def\lemax{l^{\epsilon, *, (k)}}
\def\lmax{l^{(k)}}

\sloppy

\title{Guessing a password over a wireless channel (on the effect
of noise non-uniformity)}

\author{
    \IEEEauthorblockN{Mark M. Christiansen and Ken R. Duffy}
   \IEEEauthorblockA{Hamilton Institute\\
     National University of Ireland, Maynooth\\
     Email: \{mark.christiansen, ken.duffy\}@nuim.ie}\\
  \and
\IEEEauthorblockN{Fl\'avio du Pin Calmon and Muriel M\'edard }
  \IEEEauthorblockA{Research Laboratory of Electronics\\
Massachusetts Institute of Technology\\
     Email: \{flavio, medard\}@mit.edu}
  
}


\maketitle

\begin{abstract}
A string is sent over a noisy channel that erases some of its
characters. Knowing the statistical properties of the string's
source and which characters were erased, a listener that is equipped
with an ability to test the veracity of a string, one string at a time,
wishes to fill in the missing pieces. Here we characterize the
influence of the stochastic properties of both the string's source
and the noise on the channel on the distribution of the number of
attempts required to identify the string, its guesswork. In particular,
we establish that the average noise on the channel is not a determining
factor for the average guesswork and illustrate simple settings
where one recipient with, on average, a better channel than another
recipient, has higher average guesswork. These results stand in
contrast to those for the capacity of wiretap channels and suggest
the use of techniques such as friendly jamming with pseudo-random
sequences to exploit this guesswork behavior.

\end{abstract}

\section{Introduction}
\let\thefootnote\relax\footnote{
F.d.P.C. and M.M. sponsored by the Department of Defense under Air
Force Contract FA8721-05-C-0002. Opinions, interpretations,
recommendations, and conclusions are those of the authors and are
not necessarily endorsed by the United States Government. Specifically,
this work was supported by Information Systems of ASD(R\&E). }

This paper quantifies the influence of the stochastic properties
of a string source and of a noisy erasure channel on the difficulty
a listener has in guessing the erased pieces of the string. As a
concrete example in advance of the mathematical abstraction, consider
a proximity card reader where an electronic signature, a password,
is wirelessly transmitted when the card is near the reader. An
unintended recipient is eavesdropping, but overhears the card's
transmission via a noisy channel that erases certain characters.
If the eavesdropper knows the string's source statistics and which
characters were erased, how many guesses must he make before
identifying the one that causes the card reader to notify success?

For i.i.d. character sources and noise that is independent of the
string, but possibly correlated, Theorem \ref{thm:subguess} answers
this question, providing an asymptotic approximation to the guesswork
distribution as the string becomes long. Corollary \ref{cor:growth}
establishes that the mean number of erasures on the channel and the
Shannon entropy of the character source determine the growth rate
of the expected logarithm of the number of guesses required to
identify the erased sub-string. The exponential growth rate of the
average number of guesses, however, is determined by the scaling of
the asymptotic moment of the number of erasures evaluated at the
R\'enyi entropy, with parameter $1/2$, of the character distribution.

As a consequence of these results, we provide examples illustrating
that the average guesswork can be smaller on a channel that is, on
average, noisier demonstrating that average noise is not a useful
statistic for guesswork. This conclusion may seem counterintuitive
in the context of capacity results for Wyner's wire-tap \cite{Wyner75}
that, when applied to an erasure channel, indicate that secrecy
capacity is non-zero only if the probability of erasure of the
intended party is lower than that of the eavesdropper. Results in
which a first receiver, with more erasures (on average) than a
second receiver, can better recover a message than the second
receiver are, to the authors' knowledge, few. One recent exception
is \cite{Czap11}, which also considers the effect of randomness of
erasures in message recovery. In contrast to our work, the authors
consider secret message capacity in a specific setting that uses
feedback to provide causal channel state information for the intended
receiver, allowing the sender to transmit in a way that is advantageous
to the intended receiver. In the case of two parties with an erasure,
their scheme relies on the fact that the secret key agreement by
public discussion from common information developed by \cite{Maurer93}
reduces to requiring only the channel state be shared over a public
channel.

Guesswork analysis of a distinct wiretap model to the one considered
here is provided in \cite{Merhav99,Hanawal11a}. There it is assumed
that an eavesdropper observes a string that has been encrypted with
a function employing a uniformly chosen random key. The impact of
key rate on asymptotic moments of guessing is determined.

\section{Guesswork and erasure channels}

We begin with summarizing material on the mathematical formulation
for guesswork followed by a brief overview of the relevance of
erasure channels as models of wireless communication.
Let $\A=\{0,\ldots,m-1\}$ be a finite alphabet and consider a
stochastic sequence of words, $\{W_k\}$, where $W_k$ is a string
of length $k$ taking values in $\A^k$. Assume that a word is selected
and an inquisitor is equipped with a device, such as a one-way hash
function, through which a word can be tested one at a time. With
no information beyond the string length and the source statistics,
their optimal strategy to identify the word is to generate a
partial-order of the words from most likely to least likely and
guess them in turn. That is, for each $k$ the attacker generates a
function $G:\A^k\to\{1,\ldots,m^k\}$ such that $G(w')<G(w)$ if
$\P(W_k=w')>\P(W_k=w)$. For a word $w$ the integer $G(w)$ is the
number of guesses until the string $w$ is guessed, its Guesswork.

Massey \cite{Masey94} established that the Shannon entropy of the
word source bears little relation to the average guesswork.  As
word length grows an asymptotic relationship between scaled moments
of the guesswork distribution and specific R\'enyi entropy was
identified under weak assumptions on the stochastic properties of
the string source \cite{Arikan96,Malone042,Pfister04,Hanawal11}.
These results have recently been built upon to establish that
$\{k^{-1}\log G(W_k)\}$ satisfies a Large Deviation Principle (LDP)
\cite{Christiansen13}, giving a direct approximation to the guesswork
distribution, $P(G(W_k)=n)$ for $n\in\{1,\ldots,m^k\}$.

In the present article we restrict to i.i.d. letter sources, but
include noise sources that could potentially be correlated. This
enables us to consider the erasures as a subordinating process for
the guesswork, as will become clear.
\begin{assumption}
The string $W_k$ is constituted of independent and identically
distributed characters with distribution $P(W_1=i)$ for $i\in\A$.
\end{assumption}
Under this assumption, if one must guess the entire word $W_k$, 
the following result is known.
\begin{proposition}[\cite{Arikan96,Pfister04,Christiansen13}]
The scaled Cumulant Generating Function (sCGF) of $\{k^{-1}\log
G(W_k)\}$ exists 
\begin{align}
\LambdaG(\alpha)&=
\lim_{k\to\infty} \frac 1k \log E(\exp(\alpha\log(G(W_k)))) \nonumber\\
	&= 
	\begin{cases}
	\displaystyle
	\alpha R\left(\frac{1}{1+\alpha}\right) & \text{ if } \alpha>-1\\
	-R(\infty) & \text{ if } \alpha\leq-1,\\
	\end{cases}
\label{eq:sCGF}
\end{align}
where $R(\alpha)$ is the R\'enyi entropy with parameter $\alpha$,
\begin{align*}
R(\alpha) &= \frac{1}{1-\alpha}\log\left(\sum_{i\in\A} P(W_1=i)^\alpha\right)\\
R(\infty) &= -\max_{i\in\A}\log P(W_1=i).
\end{align*}
Moreover, the process $\{k^{-1}\log G(W_k)\}$
satisfies a Large Deviation Principle with rate function
\begin{align}
\label{eq:rf}
\LambdaG^*(x) = \sup_{\alpha\in\R}(x\alpha-\LambdaG(\alpha)).
\end{align}
\end{proposition}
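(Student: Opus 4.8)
The plan is to establish the moment formula \eqref{eq:sCGF} by two-sided estimates on $E(G(W_k)^{\alpha})$, and then to read off the large deviation principle \eqref{eq:rf} from $\LambdaG$ by a G\"artner--Ellis argument. Write $p_i:=P(W_1=i)$. A word $w\in\A^k$ whose empirical letter frequencies form a composition $\tau$ has $P(W_k=w)=\exp(-k\phi(\tau))$ with $\phi(\tau):=\sum_{i\in\A}\tau_i\log(1/p_i)=D(\tau\Vert p)+H(\tau)$, where $D(\cdot\Vert\cdot)$ is relative entropy and $H$ Shannon entropy; there are $\exp(kH(\tau)+o(k))$ words of composition $\tau$, and only polynomially many distinct compositions.

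For $\alpha\ge 0$ I would use Arikan's probability-ratio bound: since $G$ ranks words by non-increasing probability, for every $w$ and every $s\ge 0$ one has $G(w)\le\sum_{w'\in\A^k}\bigl(P(W_k=w')/P(W_k=w)\bigr)^{s}=P(W_k=w)^{-s}\bigl(\sum_{i\in\A}p_i^{s}\bigr)^{k}$. Raising to the power $\alpha$, multiplying by $P(W_k=w)$, summing over $w$, and taking $s=1/(1+\alpha)$ so that the remaining power of $P(W_k=w)$ vanishes gives $E(G(W_k)^{\alpha})\le\bigl(\sum_{i\in\A}p_i^{1/(1+\alpha)}\bigr)^{k(1+\alpha)}$, which rewritten via the R\'enyi entropy is $\exp(k\alpha R(1/(1+\alpha)))$. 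For the matching lower bound one restricts $E(G(W_k)^{\alpha})=\sum_{\tau}\exp(-k\phi(\tau))\sum_{w\text{ of composition }\tau}G(w)^{\alpha}$ to the tilted composition $\tau^{*}_i\propto p_i^{1/(1+\alpha)}$: such a $w$ is guessed only after every strictly-more-likely word, of which there are $\exp(kg(\tau^{*})+o(k))$ with $g(\tau):=\sup\{H(\nu):\phi(\nu)\le\phi(\tau)\}$, and a direct calculation gives $-\phi(\tau^{*})+H(\tau^{*})+\alpha g(\tau^{*})=\alpha R(1/(1+\alpha))$.

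For $-1<\alpha<0$ the same type decomposition applies, but since $x\mapsto x^{\alpha}$ is decreasing one argues through the ordered word probabilities $P_{(1)}\ge P_{(2)}\ge\cdots$: to exponential order $P_{(n)}=\exp(-k\phi(\tau))$ when $n\approx\exp(kg(\tau))$, and since the dominant $n$ is exponentially large, Laplace's method over the polynomially many $\tau$ again returns $\alpha R(1/(1+\alpha))$. For $\alpha\le -1$ the behaviour changes: from $nP_{(n)}\le 1$ one gets $E(G(W_k)^{\alpha})=\sum_{n}P_{(n)}n^{\alpha}\le\sum_{n}\min(P_{(1)},1/n)\,n^{\alpha}$, and because $P_{(1)}=\exp(-kR(\infty))$ while $\sum_{n\le m^{k}}n^{\alpha}$ grows sub-exponentially in $k$, the right-hand side is $\exp(-kR(\infty))$ up to sub-exponential factors; the term $n=1$ supplies the matching lower bound, so $\LambdaG(\alpha)=-R(\infty)$ there. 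This proves \eqref{eq:sCGF}.

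Given \eqref{eq:sCGF}, $\LambdaG$ is finite, convex and lower semicontinuous on $\R$, so the G\"artner--Ellis theorem yields the large deviation upper bound for closed sets and the lower bound at exposed points of $\LambdaG^{*}$, which is the Legendre--Fenchel transform \eqref{eq:rf}. The one subtlety is that $\LambdaG$ need not be differentiable at $\alpha=-1$ --- already for the uniform source $\LambdaG(\alpha)=\max(\alpha\log m,-\log m)$ has a corner there --- so $\LambdaG^{*}$ may carry an affine stretch whose relative interior contains no exposed points, and the lower bound there must be obtained separately, e.g.\ by interpolating between its exposed endpoints or directly from the type-level representation of $k^{-1}\log G(W_k)$. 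I expect the real work, and the main obstacle, to be precisely this $\alpha\le -1$ / non-smooth regime: bounding the ordered probabilities $P_{(n)}$ from below for small $n$ (equivalently, showing that the few likeliest strings already carry the relevant mass), checking that the attendant variational problem collapses to $-R(\infty)$, and filling in the non-exposed part of the large-deviation lower bound that G\"artner--Ellis does not give for free.
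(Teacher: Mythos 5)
The paper does not prove this proposition; it imports it verbatim from \cite{Arikan96,Pfister04,Christiansen13}, so there is no in-paper argument to compare against. Measured against those sources, your reconstruction of the moment formula is essentially the standard one and is sound: the Arikan change-of-measure bound $G(w)\le P(W_k=w)^{-s}\bigl(\sum_i p_i^s\bigr)^k$ with $s=1/(1+\alpha)$ gives the upper bound for $\alpha\ge 0$, the tilted type $\tau^*_i\propto p_i^{1/(1+\alpha)}$ gives the matching lower bound (your identity $-\phi(\tau^*)+(1+\alpha)H(\tau^*)=(1+\alpha)\log\sum_i p_i^{1/(1+\alpha)}=\alpha R(1/(1+\alpha))$ checks out, using that $H(\tau^*)=g(\tau^*)$ for $s\in(0,1]$), and the ordered-probability argument for $\alpha\le-1$, using $P_{(1)}=\exp(-kR(\infty))$ and the sub-exponential growth of $\sum_{n\le m^k}n^{\alpha}$, correctly yields the plateau $-R(\infty)$. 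The $-1<\alpha<0$ case is only sketched, but the skeleton (Laplace over polynomially many types of $-\phi(\tau)+H(\tau)+\alpha g(\tau)$) is the Pfister--Sullivan route and does close.

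The one genuine gap is exactly where you suspect it: the LDP lower bound on the affine stretch of $\LambdaG^*$ created by the kink of $\LambdaG$ at $\alpha=-1$. You are right that G\"artner--Ellis is silent there, but your first proposed repair --- ``interpolating between its exposed endpoints'' --- is not a valid general device: the large-deviation lower bound can genuinely fail on an affine segment of the Legendre transform of a limiting sCGF (convexity of $\LambdaG^*$ does not transfer lower bounds from the endpoints to the interior), so the segment must be handled by a problem-specific argument. That is precisely what \cite{Christiansen13} supplies: it abandons G\"artner--Ellis and shows directly that the upper and lower deviation functions of $k^{-1}\log G(W_k)$ coincide at every $x$, using the fact that the ordered word probabilities sweep continuously (on the exponential scale) through the whole range $[0,\log m]$, and then verifies that the resulting rate function equals $\sup_\alpha(x\alpha-\LambdaG(\alpha))$. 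Your second suggestion --- working from the type-level representation of $k^{-1}\log G(W_k)$ --- is the right one, but as written it is a pointer to the needed argument rather than the argument itself, so to stand alone your proof must carry out that local two-sided estimate on $P\bigl(k^{-1}\log G(W_k)\in B_\epsilon(x)\bigr)$ for $x$ in the affine range. (Notably, the same coincide-the-deviation-functions technique is what the paper itself uses to prove its Theorem~\ref{thm:subguess}.)
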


As in \cite{Arikan96}, setting $\alpha=1$ equation
\eqref{eq:sCGF} gives
\begin{align*}
\LambdaG(1) &= \lim_{k\to\infty} \frac 1k \log E(G(W_k)) \\
	&= R(1/2) 
	= 2\log\left(\sum_{i\in\A} P(W_1=i)^{1/2}\right),
\end{align*}
establishing that the exponential growth rate of the average guesswork
as the string gets longer is governed by R\'enyi entropy of the
character distribution with parameter $1/2$, which is greater than
its Shannon entropy, with equality if and only if the character
source is uniformly distributed. The LDP gives the following
approximation \cite{Christiansen13} for large $k$ and
$n\in\{1,\ldots,m^k\}$,
\begin{align*}
P(G(W_k)=n) \approx \frac 1n \exp\left(-k\LambdaG^*\left(\frac 1k\log n\right)\right). 
\end{align*}

Erasure models are common for coded communications. They arise for
systems where an underlying error-correcting code can fail to correct
the errors, but error-detection mechanisms will lead to detection
of the failure to correct. While it is possible for errors to remain
uncorrected in such a way that the receiver cannot detect the failure
to correct. That traditional algebraic codes with $n$ symbols of
redundancy can correct up to $n$ errors but detect up to $2n-1$
errors justifies the common assumption that failures to detect
errors may be neglected, whereas failures to correct may not. Failure
to correct errors may be a design goal in certain systems. In wiretap
channels, codes are deliberately constructed in such a way that,
under channel conditions less favorable than those of the intended
receiver, codes fail to decode (e.g. \cite{Bloch11}).

\section{Subordinated guesswork - General Results}

We wish to consider the guesswork required to identify a string,
$W_k$, sent over a stochastic, noisy channel that erases characters.
We assume that a listener is equipped with an ability to test the
veracity of each missing sub-string and wishes to fill in the missing
piece. As the string $W_k$ is made up of i.i.d. characters, if
$N_k\in\{1,\ldots,k\}$ is the number of characters erased by the
noise, the listener must effectively guess a word of $N_k$ characters
in length. Thus we are interested in properties of the the guesswork
of the word subordinated by the erasures process, $G(W_{N_k})$, 
wishing to understand the influence of the properties of the string
source and the noise on the channel on the distribution of the
number of attempts required to identify the missing sub-string.

While we assume that the string is made up of i.i.d. characters,
the noise process can be correlated and we make the following
assumption, which encompasses, for example, Markovian erasure
processes. 
\begin{assumption}
\label{ass:noise}
The noise process is such that $\{N_k/k\}$, where $N_k$ is
the number of erasures due to noise in a string of length $k$,
satisfies a LDP with convex rate function $\IN:\R\mapsto[0,\infty]$
such that $\IN(y)=\infty$ if $y\notin[0,1]$.
\end{assumption}
Loosely speaking, assumption \ref{ass:noise} implies that $P(N_k\approx
yk) \asymp \exp(-k \IN(y))$. 
Our main general, structural result is the following. 
\begin{theorem}
\label{thm:subguess}
The subordinated guesswork process $\{1/k \log G(W_{N_k})\}$ 
satisfies a LDP with convex rate function
\begin{align}
\label{eq:IGN}
\IGN(x) = \inf_{y\in[0,1]}
	\left(y\LambdaG^*\left(\frac{x}{y}\right)+\IN(y)\right).
\end{align}
The sCGF for $\{1/k \log G(W_{N_k})\}$, the Legendre-Fenchel
transform of $\IGN$, is given by the composition of the
sCGF for the noise with sCGF for the non-subordinated guesswork
\begin{align}
\label{eq:LambdaGN}
\LambdaGN(\alpha) 
&=
\lim_{k\to\infty} \frac 1k \log E\left(\exp(\alpha\log(G(W_{N_k})))\right)
	\nonumber\\
	&= \LambdaN(\LambdaG(\alpha)).
\end{align}
\end{theorem}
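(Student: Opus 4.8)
The plan is to condition on the erasure count $N_k$ and splice together the two large deviation principles already in hand: the one for $\{n^{-1}\log G(W_n)\}$ supplied by the Proposition (with $k$ renamed to $n$) and the one for $\{N_k/k\}$ supplied by Assumption~\ref{ass:noise}. Because $W_k$ has i.i.d.\ letters and the noise is independent of the string, conditioning on $\{N_k=n\}$ leaves a fresh word of length $n$, so that for any Borel set $B$
\begin{equation}
\label{eq:cond}
P\!\left(\tfrac1k\log G(W_{N_k})\in B\right)=\sum_{n=0}^{k}P(N_k=n)\;P\!\left(\tfrac1k\log G(W_n)\in B\right),
\end{equation}
the $n=0$ term being degenerate and immaterial. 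Writing $Z_n:=n^{-1}\log G(W_n)$, so that $k^{-1}\log G(W_n)=(n/k)Z_n$, the sum has only $k+1$ terms and hence on the exponential scale is governed by its largest summand. When $n\approx yk$ the factor $P(Z_n\approx x/y)$ decays at rate $n\LambdaG^*(x/y)\approx k\,y\LambdaG^*(x/y)$ (the perspective function of $\LambdaG^*$), while $P(N_k=n)$ decays at rate $k\IN(y)$; adding these and optimizing over $y\in[0,1]$ reproduces exactly \eqref{eq:IGN}.

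To make this rigorous I would first prove the upper bound: from \eqref{eq:cond}, for closed $F$, $\limsup_k k^{-1}\log P(\tfrac1k\log G(W_{N_k})\in F)\le\limsup_k\max_{0\le n\le k}k^{-1}\bigl[\log P(N_k=n)+\log P((n/k)Z_n\in F)\bigr]$; grouping the indices according to the cell of a fine grid of $[0,1]$ containing $n/k$, applying the LDP upper bound for $\{N_k/k\}$ on each cell and a uniform-in-$n$ form of the LDP upper bound for $\{Z_n\}$ — available because $Z_n\in[0,\log m]$ is bounded and $\LambdaG^*$ is a good rate function — and then letting the mesh vanish yields the bound $-\inf_{x\in F}\IGN(x)$. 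For the lower bound, given open $G$ and $x\in G$ with $\IGN(x)<\infty$, I would choose $y\in(0,1]$ nearly attaining the infimum in \eqref{eq:IGN}, restrict \eqref{eq:cond} to the $n$ with $n/k$ in a small open interval about $y$, bound $\sum_n P(N_k=n)$ over that interval below via the LDP lower bound for $\{N_k/k\}$ and $P((n/k)Z_n\in G)$ below via the LDP lower bound for $\{Z_n\}$ at speed $n\approx yk$, and then shrink the interval. The boundary value $y=0$ is handled separately: there $n=o(k)$ forces $(n/k)Z_n\to 0$ by boundedness of $Z_n$, matching $\lim_{y\downarrow0}y\LambdaG^*(x/y)$, which equals $0$ at $x=0$ and $+\infty$ otherwise. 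This establishes the LDP with rate function \eqref{eq:IGN}.

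The sCGF then follows painlessly. Since $X_k:=k^{-1}\log G(W_{N_k})\in[0,(N_k/k)\log m]\subseteq[0,\log m]$ is uniformly bounded, Varadhan's lemma applies with no extra integrability hypothesis, so
\[
\LambdaGN(\alpha)=\lim_{k\to\infty}\tfrac1k\log E\!\left(e^{\alpha\log G(W_{N_k})}\right)=\sup_{x\in\R}\bigl(\alpha x-\IGN(x)\bigr)=\IGN^*(\alpha).
\]
It then remains to identify $\IGN^*$ with $\LambdaN\circ\LambdaG$, which is pure convex analysis: $\IGN$ is the partial minimization over $y$ of $(x,y)\mapsto y\LambdaG^*(x/y)+\IN(y)$, jointly convex because its first summand is the (closed) perspective of the convex $\LambdaG^*$ and $\IN$ is convex with effective domain inside $[0,1]$; hence $\IGN$ is convex, and being a rate function it is lower semicontinuous, so $\IGN=\IGN^{**}$. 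Computing $\IGN^*(\alpha)$, one exchanges the two suprema and substitutes $x=yu$, turning $\sup_x(\alpha x-y\LambdaG^*(x/y))$ into $y\sup_u(\alpha u-\LambdaG^*(u))=y\LambdaG(\alpha)$ (using $\LambdaG^{**}=\LambdaG$, valid as $\LambdaG$ is finite hence continuous on $\R$), so that $\IGN^*(\alpha)=\sup_{y\in[0,1]}(y\LambdaG(\alpha)-\IN(y))=\LambdaN(\LambdaG(\alpha))$ because $\LambdaN=\IN^*$ and $\IN=+\infty$ off $[0,1]$. This is \eqref{eq:LambdaGN}, and $\IGN=\IGN^{**}=(\LambdaN\circ\LambdaG)^*$ re-expresses the rate function as the Legendre--Fenchel transform of $\LambdaGN$.

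I expect the main obstacle to be the uniformity in $n$ needed to pass from the per-length LDP for $\{Z_n\}$ to the decay of $P((n/k)Z_n\in\cdot)$ uniformly over the whole range $n\asymp yk$, together with the treatment of the boundary $y=0$ and of any $y$ at which $\IN$ is discontinuous; boundedness of $Z_n$ and goodness of $\LambdaG^*$ are precisely what keep this under control. An alternative is to compute $\LambdaGN(\alpha)=\LambdaN(\LambdaG(\alpha))$ directly from \eqref{eq:cond} — approximating $E(G(W_n)^\alpha)$ by $e^{n\LambdaG(\alpha)}$ via $n^{-1}\log E(G(W_n)^\alpha)\to\LambdaG(\alpha)$, then applying Varadhan's lemma to $E(e^{N_k\LambdaG(\alpha)})$ through the LDP for $\{N_k/k\}$ — and then invoking the Gärtner--Ellis theorem for the LDP; but that route additionally demands checking that $\LambdaN\circ\LambdaG$ is essentially smooth, which the conditioning argument sidesteps.
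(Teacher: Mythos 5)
Your proposal is correct and follows essentially the same route as the paper: both condition on the erasure count $N_k$, splice the LDP for $\{n^{-1}\log G(W_n)\}$ with that for $\{N_k/k\}$ to obtain the rate function as $\inf_{y}\bigl(y\LambdaG^*(x/y)+\IN(y)\bigr)$ with the $y=0$ boundary handled separately, and then derive the sCGF via Varadhan's lemma (trivially applicable by boundedness) followed by the identical exchange-of-suprema and perspective-function convexity computation. The only difference is one of packaging: the paper formalizes the splicing as a joint LDP for $\bigl(k^{-1}\log G(W_{N_k}),\,N_k/k\bigr)$ obtained from coinciding upper and lower deviation functions on product balls plus compactness, then projects by the contraction principle, whereas you apply the largest-term principle directly to the conditional sum over the $k+1$ values of $N_k$ (with the grid-discretization and uniformity-in-$n$ issues you flag being the direct analogue of the paper's ball estimates).
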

\begin{proof}
The method of proof of the LDP is akin to that used in
\cite{Christiansen13}, establishing that the upper and lower deviation
functions coincide, followed by an application of the contraction
principle. With $B_\epsilon(x)=(x-\epsilon,x+\epsilon)$. 
We first show that
\begin{align*}
&\lim_{\epsilon\downarrow0} 
	\liminf_{k\to\infty} 
	\frac 1k \log 
	P\left(\frac 1k \log G(W_{N_k}) \in B_\epsilon(x),
		\frac{N_k}{k}\in B_\epsilon(y) \right)\\
&=
\lim_{\epsilon\downarrow0} \limsup_{k\to\infty} 
	\frac 1k \log 
	P\left(\frac 1k \log G(W_{N_k}) \in B_\epsilon(x),
		\frac{N_k}{k}\in B_\epsilon(y) \right)\\
&= 
	y\LambdaG^*\left(\frac{x}{y}\right)+\IN(y)
	\text{ for all } x\geq0, y\in[0,1].
\end{align*} 
For example, for $y\in(0,1]$, consider
\begin{align*} 
&\frac 1k \log 
	P\left(\frac 1k \log G(W_{N_k}) \in B_\epsilon(x),
		\frac{N_k}{k}\in B_\epsilon(y) \right)\\
	\geq &
\frac 1k \log 
	P\left(\frac 1k \log G(W_{\lfloor k(y-\epsilon)\rfloor}) 
		\in B_\epsilon(x)\right)\\
&+
\frac 1k \log P\left(\frac{N_k}{k}\in B_\epsilon(y) \right).
\end{align*} 
Taking $\liminf_{k\to\infty}$, using the LDPs for $\{k^{-1}\log
G(W_k)\}$ and $\{N_k/k\}$ followed by $\lim_{\epsilon\downarrow0}$
gives an appropriate lower bound. An equivalent upper bound follows
similarly. 

For $y=0$, if $x>0$ we can readily show that the upper deviation
function takes the value $-\infty$ as $G(W_{\lfloor\epsilon
y\rfloor})\leq m^{y\epsilon}$. If $x=0$, then the $\limsup$ bound
is achieved by solely considering the noise term, while for the
$\liminf$ consider the ball $G(W_{N_k})\leq \exp(k\epsilon\log(m))$,
which has probability $1$ and so the upper and lower deviation
functions again coincide. 

As the state space is compact, the LDP for $\{(1/k \log G(W_{N_k}),
N_k/k)\}$ follows (e.g. \cite{Lewis95A, Dembo98}) with the rate
function $y\LambdaG^*(x/y)+\LambdaN^*(y)$. From this LDP, the LDP
for $\{(1/k \log G(W_{N_k})\}$ via the contraction principle
\cite{Dembo98} by projection onto the first co-ordinate.

To prove that $\IGN(x)$ is convex in $x$, first note that
$y\LambdaG^*(x/y)$ is jointly convex in $x$ and $y$, with $y>0$,
by the following argument. For $\beta\in(0,1)$,
set $\eta = \beta y_1/(\beta y_1+(1-\beta)y_2)\in[0,1]$ and note that
\begin{align*}
&(\beta y_1+(1-\beta)y_2)
\LambdaG^*\left(\frac{\beta x_1+(1-\beta)x_2}{\beta y_1+(1-\beta)y_2}\right)\\
&= (\beta y_1+(1-\beta)y_2) 
	\LambdaG^*\left(\eta \frac{x_1}{y_1}+(1-\eta){x_2}{y_2}\right)\\
&\leq \beta y_1 \LambdaG^*\left(\frac{x_1}{y_1}\right)
	+(1-\beta)y_2 \LambdaG^*\left(\frac{x_2}{y_2}\right),
\end{align*}
where we have used the convexity of $\LambdaG^*$. As the sum of
convex functions is convex, $y\LambdaG^*\left(x/y\right)+\IN(y)$ is
convex and as the point-wise minimum of a jointly convex function
is convex, $\IGN(x)$ is convex.

An application of Varadhan's Lemma (Theorem 4.3.1 \cite{Dembo98})
identifies the sCGF for the subordinated process as the Legendre
Fenchel transform of $\IGN$, $\sup_{x\in\R}(\alpha x - \IGN(x))$. 
To convert this into an expression in terms of $\LambdaN$ and
$\LambdaG$ observe that
\begin{align*}
\sup_{x\in\R}(\alpha x - \IGN(x)) 
&= 
\sup_{x\in\R}\sup_{y\in\R}
	\left(\alpha x- y\LambdaG^*\left(\frac{x}{y}\right)-\IN(y)\right)\\
&= 
\sup_{y\in\R}
	\left(y \sup_{z\in\R}(\alpha z- \LambdaG^*(z))-\IN(y)\right)\\
&= 
\sup_{y\in\R}
	\left(y \LambdaG(\alpha)-\IN(y)\right)\\
&= 
	\LambdaN(\LambdaG(\alpha)).
\end{align*}
\end{proof}
Theorem \ref{thm:subguess}, in particular, identifies the
growth rate of the average subordinated guesswork.
\begin{corollary}
\label{cor:growth}
The growth rate of the average of the logarithm of the
subordinated guesswork is determined by the average noise
and the Shannon entropy of the character source
\begin{align*}
\lim_{k\to\infty} \frac 1k E\left(\log G(W_{N_k})\right)
	&=\frac{d}{d\alpha}\LambdaN(\LambdaG(\alpha))|_{\alpha=0}\\
 	&= \muN \HG,
\end{align*}
where 
\begin{align*}
\muN=\lim_{k\to\infty} \frac{E(N_k)}{k},
\text{ } \HG=-\sum_{i\in\A} P(W_1=i)\log P(W_1=i). 
\end{align*}
The growth rate of the average subordinated guesswork is, however,
given by the sCGF of the noise evaluated at the character R\'enyi entropy 
at $1/2$,
\begin{align*}
\lim_{k\to\infty} \frac 1k \log E\left(G(W_{N_k})\right)
	&=\LambdaN(\LambdaG(1))\\
 	&= \lim_{k\to\infty} \frac 1k \log E\left(\exp\left(R(1/2) N_k\right)\right).
\end{align*}
\end{corollary}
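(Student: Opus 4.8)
The plan is to read both assertions off the sCGF identity $\LambdaGN(\alpha)=\LambdaN(\LambdaG(\alpha))$ from Theorem~\ref{thm:subguess}: the exponential growth rate of the average guesswork by evaluating at $\alpha=1$, and the growth rate of the average log-guesswork by differentiating at $\alpha=0$.

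For the second identity, first I would note that $\LambdaGN$ is real-valued on all of $\R$: $\LambdaG$ is finite everywhere by its explicit form, and $\LambdaN(\beta)=\sup_{y\in[0,1]}(\beta y-\IN(y))$ is finite because $\IN$ is a rate function supported on the bounded interval $[0,1]$. Hence the limit $\LambdaGN(\alpha)=\lim_k\frac1k\log E(\exp(\alpha\log G(W_{N_k})))$ furnished by Theorem~\ref{thm:subguess} is available at $\alpha=1$, where $\exp(\log G(W_{N_k}))=G(W_{N_k})$, giving $\lim_k\frac1k\log E(G(W_{N_k}))=\LambdaGN(1)=\LambdaN(\LambdaG(1))$. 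The value $\LambdaG(1)=R(1/2)$ is the one already computed just after the Proposition, and since $\LambdaN$ is the Legendre--Fenchel transform of the rate function of $\{N_k/k\}$ it is the sCGF of the noise, $\LambdaN(\beta)=\lim_k\frac1k\log E(\exp(\beta N_k))$; substituting $\beta=R(1/2)$ yields the displayed formula.

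For the first identity I would combine an elementary sandwich with a chain-rule computation. Jensen's inequality gives $\frac1k\log E(\exp(\alpha\log G(W_{N_k})))\ge\frac{\alpha}{k}E(\log G(W_{N_k}))$, so letting $k\to\infty$ yields $\limsup_k\frac1k E(\log G(W_{N_k}))\le\LambdaGN(\alpha)/\alpha$ for $\alpha>0$ and $\liminf_k\frac1k E(\log G(W_{N_k}))\ge\LambdaGN(\alpha)/\alpha$ for $\alpha<0$. Since $\LambdaGN(0)=\LambdaN(\LambdaG(0))=\LambdaN(0)=0$, sending $\alpha\downarrow0$ and $\alpha\uparrow0$ and using differentiability of $\LambdaGN$ at $0$ forces $\lim_k\frac1k E(\log G(W_{N_k}))=\LambdaGN'(0)$. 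By the chain rule $\LambdaGN'(0)=\LambdaN'(\LambdaG(0))\LambdaG'(0)=\LambdaN'(0)\LambdaG'(0)$; differentiating $\alpha\mapsto\alpha R(1/(1+\alpha))$ at $\alpha=0$ gives $\LambdaG'(0)=R(1)=\HG$ (the Shannon entropy, as the $\alpha\to1$ limit of the R\'enyi entropy), while $\LambdaN'(0)=\lim_k E(N_k)/k=\muN$ is the mean erasure fraction, so the product is $\muN\HG$. Alternatively, the same conclusion follows directly from the independence of the string and the noise: conditioning on $N_k$ gives $E(\log G(W_{N_k}))=\sum_n P(N_k=n)E(\log G(W_n))$ with $\frac1n E(\log G(W_n))\to\HG$ and $0\le\frac1n E(\log G(W_n))\le\log m$, so an $\epsilon$--$\delta$ split at $n=\delta k$ together with $E(N_k)/k\to\muN$ closes the argument.

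The main obstacle is the regularity underlying the interchange $\lim_k\frac1k E(\log G(W_{N_k}))=\LambdaGN'(0)$: the Jensen sandwich reduces it to $\LambdaGN$ being finite near $0$ (immediate, as above) and differentiable at $0$. Differentiability of $\LambdaG$ at $0$ is clear from its closed form and the smoothness of the R\'enyi entropy near parameter $1$; differentiability of $\LambdaN$ at $\LambdaG(0)=0$ amounts to the erasure fraction obeying a law of large numbers with limit $\muN$, i.e. the rate function $\IN$ having a unique zero, which is the standard situation within Assumption~\ref{ass:noise} (e.g. Markovian erasures). Granting this, the remainder is the bookkeeping above.
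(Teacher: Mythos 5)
Your proposal is correct and matches the paper's (implicit) argument: the paper offers no explicit proof of the corollary, simply reading both claims off the identity $\LambdaGN(\alpha)=\LambdaN(\LambdaG(\alpha))$ by evaluating at $\alpha=1$ and differentiating at $\alpha=0$, exactly as you do. Your Jensen sandwich, the chain-rule computation $\LambdaN'(0)\LambdaG'(0)=\muN\HG$ with $\LambdaG'(0)=R(1)=\HG$, and your flagging of the differentiability of $\LambdaN$ at $0$ supply regularity details the paper leaves unstated.
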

Thus the determining factor in the average guesswork is not the
average noise, but the scaling of the cumulant of the noise process
determined by the R\'enyi entropy with parameter $1/2$.

\section{Examples}

Corollary \ref{cor:growth} suggests the design of schemes whose
principle is to ensure that the stochastic properties of either the
the noise or the source, as manifested through the behavior of
$\LambdaN(\LambdaG(1))$, differs between intended receivers
and unintended receivers so as to provide a lower growth rate
in the average subordinated guesswork of intended receivers. The
intended and unintended receivers may observe the transmitted string
through parallel channels or through a common channel where there
exists a dependence in the noise at different receivers. Such
scenarios mirror those that have been considered for secrecy capacity,
with the latter having been extensively studied as a model for
wireless channels in which the unintended receivers are eavesdroppers
(e.g. \cite{Bloch11}) and the former considered less often
\cite{Yamamoto86, Yamamoto91, Zang06}.

In order to reduce the guesswork of intended receivers over
eavesdroppers, common randomness between the word subordinated by
the erasures process and the intended receivers may be used as
common randomness is a means of generating a secret key \cite{Ahlswede93,
Maurer93, Csiszar00}. Common randomness for secrecy may be derived
from the source itself, for instance as side information regarding
the source. Channel side information is commonly proposed or used
as a source of common randomness in wireless networks (e.g.
\cite{Mathur10} and references therein). The use of differentiated
channel side information for guesswork in erasure channels provides
a means of tuning properties of the sCGF of the noise, $\LambdaN$.

In wireless erasure channels, we may consider several means of
achieving differentiated channel side information between intended
receivers and eavesdroppers. Consider, for example, a fading channel,
where fades lead to erasures and where fading characteristics permit
prediction of future fades from current channel measurements. A
node that actively sounds the channel, or receives channel side
information from a helper node, may know, perfectly or imperfectly,
which erasures will occur over some future time. 

Friendly jamming instantiates different channel side information
between intended and unintended receivers by actively modifying the
channel. Friendly jamming have been proposed and demonstrated to
modify secrecy regions in wiretap-like settings \cite{Vilela11,
Gollakota11}. A notion related to friendly jamming is that of
cooperative jamming \cite{Tekin08} where multiple users collude in
their use of the shared channel in order to reduce an eavesdropper's
ability. In our setting, a jammer using a pseudo-noise sequence
that is known to the receiver but appearing Bernoulli to the
eavesdropper can render the average guesswork of the intended
receiver to be lower than that of the eavesdropper. Such friendly
jamming may be effective even if the jammer generates, on average,
more erasures for the intended receiver than for the eavesdropper,
for instance because it may be closer to the former than to the
latter. The same mechanism can be used to create attacks in which
a jammer, using a sequence known to the eavesdropper but not to the
receiver, may increase the guesswork of the intended receiver
relatively to that of the receiver.

We explore numerically the effect of modifying the distribution of
the source or erasures, for instance through the use of friendly
jamming. Corollary \ref{cor:growth} makes clear that the growth
rate of the average subordinated guesswork depends upon unusual
statistics of both the noise and of the character source. We
illustrate the ramifications of this by demonstrating that an
unintended eavesdropper can have a better channel, on average, yet
have a larger average guesswork than an intended recipient with a
noisier average channel. We illustrate this in two ways: with the
character distribution fixed and varying the noise, and vice versa.

Consider the simplest example, where an unintended recipient has
an i.i.d. channel with a probability of erasure of $p$ per character.
This is a typical channel model and gives
\begin{align*}
\LambdaN(\beta)=\log(1-p+pe^\beta).
\end{align*}
Thus his average subordinated guesswork growth rate is
\begin{align*}
\LambdaN(R(1/2))=\log(1-p+pe^{R(1/2)}) 
\geq p R(1/2),
\end{align*}
where the latter follows by Jensen's inequality and unless $p=0$
or $1$ the inequality is strict. 

If the intended receiver has a deterministic channel with a proportion
$\mu$ of characters erased, then the growth rate of its average
subordinated guesswork is $\mu R(1/2)$. In particular, if
$p<\mu<R(1/2)^{-1}\log(1-p+p\exp(R(1/2)))$ then even though the
channel of the unintended recipient is, on average, more noisy than
the intended recipient, the average guesswork of the latter is
smaller.

\begin{figure}
\vspace{-2cm}
\includegraphics[scale=0.35]{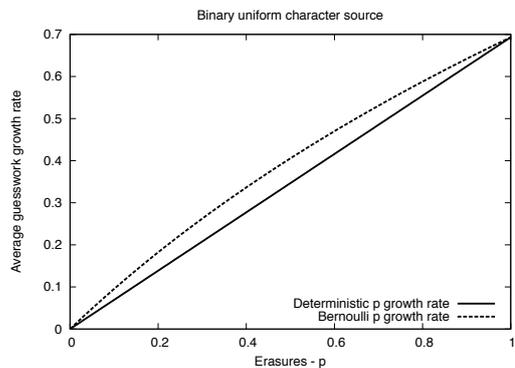}
\vspace{-1.5cm}
\caption{Binary source alphabet, $\A=\{1,2\}$, with $P(W_1=1)=1/2$. 
Average guesswork growth rate for deterministic channel with
proportion $p$ characters erased compared to a memoryless Bernoulli
$p$ erasure channel. For a given average number of erasures, the
deterministic channel has a lower average guesswork.
}
\label{fig:fixed_source}
\end{figure}

Assuming a binary alphabet, $\A=\{1,2\}$, we present three figures
to illustrate that the average guesswork depends upon both the
channel and source statistics. First, fix the source statistics by
assuming $P(W_1=1)=1/2$. For $p\in[0,1]$, Figure \ref{fig:fixed_source}
plots the average guesswork growth rate for the deterministic channel
$p R(1/2)$ and for the Bernoulli channel $\log(1-p+p\exp(R(1/2)))$.
If $p\neq0$ or $1$, the Bernoulli channel has a higher average
guesswork. Thus the intended recipient could have, on average, a
less noisy channel, yet have a lower average guesswork. For clarity,
Figure \ref{fig:fixed_source_gap} plots the difference between these
growth rates.

\begin{figure}
\vspace{-2cm}
\includegraphics[scale=0.35]{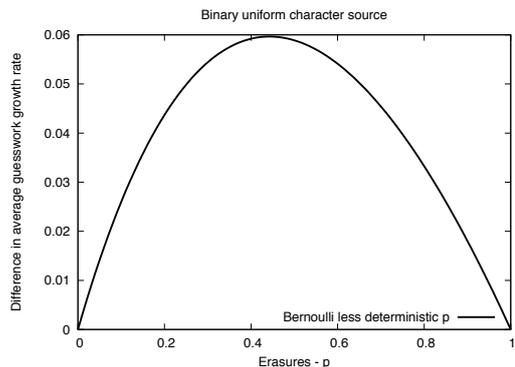}
\vspace{-1.5cm}
\caption{Binary source alphabet, $\A=\{1,2\}$, with $P(W_1=1)=1/2$. 
Similar to Figure \ref{fig:fixed_source}, but plotting the difference
between the Bernoulli $p$ average guesswork growth rate and the
deterministic $p$ average guesswork. 
}
\label{fig:fixed_source_gap}
\end{figure}

Figures \ref{fig:fixed_source} and \ref{fig:fixed_source_gap}
highlight the influence of the channel statistics on the average
guesswork growth rate, but Figure \ref{fig:fixed_noise} demonstrates
the confounding influence of the source statistics. Here we assume
that one channel is deterministic with $14\%$ of characters erased
while the other channel is Bernoulli with an average of $10\%$
characters erased. Figure \ref{fig:fixed_noise} plots the difference
in average guesswork growth rate between these two channels as the
source statistics change. If the source is less variable, the
deterministic channel has a higher average guesswork, but as the
source statistics become more variable, this reverses and the Bernoulli
channel has higher average guesswork growth rate. In other words,
even though the average noise on the deterministic channel is worse,
dependent upon the source statistics its average guesswork may
be lower than a Bernoulli channel with lower average noise.

\begin{figure}
\vspace{-2cm}
\includegraphics[scale=0.35]{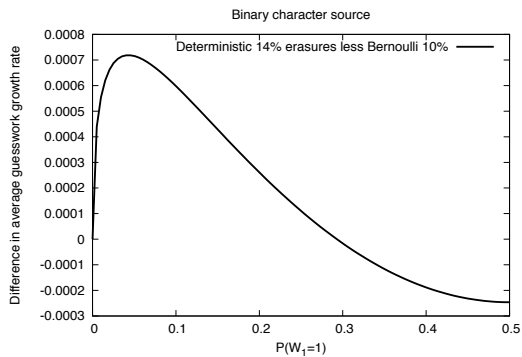}
\vspace{-1.5cm}
\caption{Binary source alphabet, $\A=\{1,2\}$. Difference in average
guesswork growth rate, as a function of $P(W_1=1)$, between a
deterministic channel with $14\%$ characters erased and a Bernoulli
channel with $10\%$ chance that each character is deleted. If the
character source is less variable, the deterministic channel has a
higher growth rate, but as the character source becomes more variable,
it has a lower growth rate.
}
\label{fig:fixed_noise}
\end{figure}

Between them, these examples indicate the trade-off in influence
of the source and noise statistics on the guesswork. While we have
assumed the simplest noise channels, these results are characteristic
of the system. 

\section{Conclusions}

We have characterized the asymptotic distribution of the guesswork
required to reconstitute a string that has been subject to symbol
erasure, as occurs on noisy communication channels.
The scaled Cumulant Generating Function of the guesswork subordinated
by the erasure process has a simple characterization as the composition
of the sCGF of the noise with the sCGF of the unsubordinated guesswork.
This form is redolent of the well-known result for the moment
generating function for a random sun of random summands, but is an
asymptotic result for guesswork. These results suggest that methods
inspired from the secrecy capacity literature, such as the use of
differentiated channel or source side information between the
intended receiver and the eavesdropper, can be used in the context
of guesswork. Indeed, numerical examples show that deterministic
erasures can lead to lower average guesswork than Bernoulli erasures
with a lower mean number of erasures. In further work, one may
consider the behavior of guesswork in different settings that have
been explored in the wiretap and cognate literature.

One may also envisage generalizing this analysis to the case where
there are retransmissions of the entire string or of the symbols
that have not been received by the intended receiver. Retransmissions
are commonly employed in several protocols to enable reliability
and, in the case of an erasure channel with perfect feedback, taking
the form of acknowledgments, uncoded retransmission is capacity-achieving.


\end{document}